\newtheorem{theorem}{Theorem}
\newtheorem{lemma}{Lemma}
\newtheorem{corollary}{Corollary}
\newcommand{\RR}{\mathbb{R}} 
\newcommand{\eps}{\varepsilon}
\newcommand{\conv}{{\rm conv}}
\newcommand{\diam}{{\rm diam}}
\newcommand{\len}{{\rm len}}
\newcommand{\area}{{\rm Area}}
\newcommand{\per}{{\rm per}}
\newcommand{\vol}{{\rm Vol}}
\newcommand{\opt}{\textsf{OPT}}
\newcommand{\etal}{{et~al.}}
\newcommand{\ie}{{i.e.}}
\newcommand{\eg}{{e.g.}}
\newcommand{\later}[1]{}
\newcommand{\old}[1]{}
\providecommand{\intd}[0]%
{\;\mbox{d}}
\title{{\sc On the Shortest Separating Cycle}\thanks{A preliminary
    version of this paper appeared in the {\em Proceedings of the 29th
      Canadian Conference on Computational Geometry}, Ottawa, ON,
    Canada, July 2017.
}}
\author{
Adrian Dumitrescu\thanks{%
Department of Computer Science,
University of Wisconsin--Milwaukee, USA\@.
Email:~\texttt{dumitres@uwm.edu}}}
\begin{document}

\maketitle

\begin{abstract}
 According to a result of Arkin~\etal~(2016), given $n$ point pairs in the plane,
  there exists a simple polygonal cycle that separates the two points in each pair
  to different sides; moreover, a $O(\sqrt{n})$-factor approximation 
  with respect to the minimum length can be computed in polynomial time.
  Here the following results are obtained:
  
  (I)~We extend the problem to geometric hypergraphs and obtain the following characterization of 
  feasibility. Given a geometric hypergraph on points in the plane with hyperedges of size at least $2$, 
  there exists a simple polygonal cycle that separates each hyperedge if and only if the
  hypergraph is $2$-colorable.

  (II)~We extend the $O(\sqrt{n})$-factor approximation in the length measure as follows:
  Given a geometric graph $G=(V,E)$, a separating cycle (if it exists) can be computed
  in $O(m+ n\log{n})$ time, where $|V|=n$, $|E|=m$. 
  Moreover, a $O(\sqrt{n})$-approximation of the shortest separating cycle can be found
  in polynomial time.
  Given a geometric graph $G=(V,E)$ in $\RR^3$, a separating polyhedron (if it exists)
  can be found in $O(m+ n\log{n})$ time, where $|V|=n$, $|E|=m$. 
  Moreover, a $O(n^{2/3})$-approximation of a separating polyhedron of minimum perimeter
  can be found in polynomial time.
  
  (III)~Given a set of $n$ point pairs in convex position in the plane, we show that
  a $(1+\eps)$-approximation of a shortest separating cycle can be computed in time
  $n^{O(\eps^{-1/2})}$. In this regard, we prove a lemma on convex polygon approximation
  that is of independent interest. 

  \medskip
\textbf{\small Keywords}: Minimum separating cycle, traveling salesman problem,
geometric hypergraph, $2$-colorability, convex body approximation.

\end{abstract}

\section{Introduction} \label{sec:intro}

Given a set of $n$ pairs of points in the plane with no common elements,
$\{(p_i,q_i) \ | \ i=1,\ldots,n\}$, a \emph{shortest separating cycle} is a plane cycle
(a closed curve, a.k.a. \emph{tour}) of minimum length that contains inside exactly one point
from each of the $n$ pairs. The problem {\sc Shortest Separating Cycle} is that
of finding such a cycle, given the input pairs. It was introduced by Arkin~\etal~\cite{AGH+16}
motivated by applications in data storage and retrieval in distributed sensor networks.
The authors gave a $O(\sqrt{n})$-factor approximation for the general case and better
approximations for some special cases. On the other hand, using a reduction from
{\sc Vertex Cover}, they showed that the problem is hard to approximate
for a factor of $1.36$ unless ${\rm P}={\rm NP}$, 
and is hard to approximate for a factor of $2$ assuming the Unique Games Conjecture;
see, \eg, \cite[Ch.~16]{WS11} for technical background.

The assumption that no point appears more than once,
\ie, $|\{p_1,\ldots,p_n\} \cup \{q_1,\ldots,q_n\}|=2n$, is sometimes necessary
for the existence of a separating cycle; \ie, there are instances of sets of pairs with
common elements and no separating cycle; see for instance Fig.~\ref{fig:f1} (the edges
in these graphs represent pairs of input points).
For convenience, points on the boundary of the cycle are considered inside;
it is easy to see that requiring points to lie strictly in the interior or
also on the boundary are equivalent variants in regards to the existence
of a separating cycle. Moreover, the equivalence is almost preserved in the length measure:
given any positive $\eps>0$, and a separating cycle $C$ for $n$ pairs,
enclosing $P=\{p_1,\ldots,p_n\}$ (after relabeling each pair, if needed),
with some of the points of $P$ on its boundary, a separating cycle of length at most
$(1+\eps) \, \len(C)$ can be constructed, having all points of $P$ in its interior.

In this paper we study the extension of the concept of separating cycle to
arbitrary graphs and hypergraphs, and to higher dimensions;
in the original version introduced by Arkin~\etal~\cite{AGH+16}, the input graph is a \emph{matching},
\ie,  it consists of $n$ edges with no common endpoints; see Fig.~\ref{fig:f3} for an example. 
Two instances with $8$ and respectively $3$ point pairs that do not admit separating cycles
are illustrated in Fig.~\ref{fig:f1}; the common reason is that both graphs contain odd cycles
and odd cycles do not admit separating cycles.
\begin{figure}[hbtp]
\centering\includegraphics[scale=0.9]{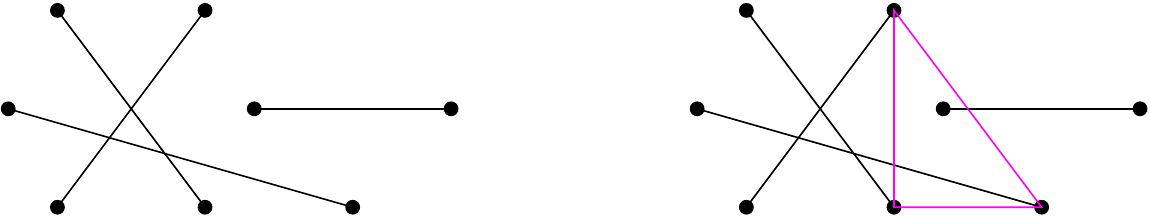}
\caption{A shortest separating cycle for a matching.}
\label{fig:f3}
\end{figure}

We observe that for arbitrary input graphs, one cannot use the algorithm from~\cite{AGH+16}.
That algorithm (in~\cite[Subsec.~3.5]{AGH+16}) first computes a minimum-size square $Q$
containing at least one point from each pair, and then computes a constant-factor
approximation of a shortest cycle (tour) of the points contained in $Q$, in the form of
a simple polygon. In the end, this tour is refined to a separating cycle of the given set
of point pairs with only a small increase in length.
Here we note that there exist instances, such as that in Fig.~\ref{fig:f1}\,(right),
for which there is no separating cycle confined to $Q$;
moreover, the length of a shortest separating cycle can be arbitrarily
larger than any function of $\diam(Q)$ and $n$, and so a new approach is needed
for the general version with arbitrary input graphs, or its extension to hypergraphs;
\ie, the current $O(\sqrt{n})$-factor approximation does not carry through to these settings.
\begin{figure}[hbtp]
\centering\includegraphics[scale=0.79]{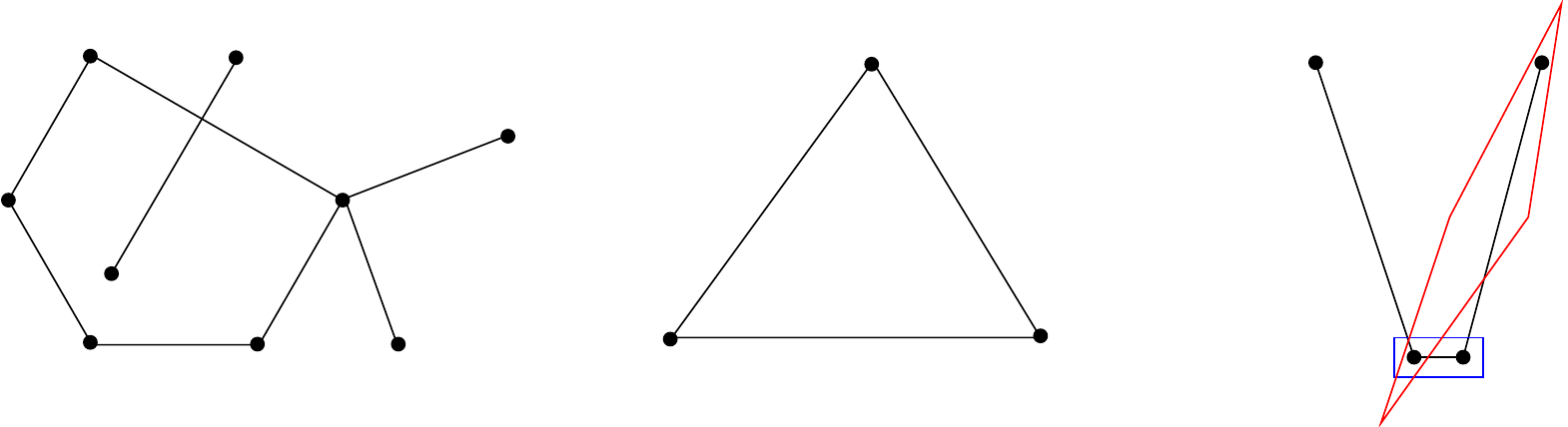}
\caption{Left and center: instances with no separating cycle.
  Right: instance where the minimum axis-parallel square (or rectangle)
  that contains at least one point from each pair does not lead to a solution;
a solution is indicated by the red cycle.}
\label{fig:f1}
\end{figure}

We first show that a planar geometric graph $G=(V,E)$ admits a separating cycle
(for all its edge-pairs) if and only if it is bipartite.
This result can be extended to hypergraphs in $\RR^d$.
Given a geometric hypergraph on points in $\RR^d$ with no singleton
edges\footnote{A \emph{singleton} edge is an edge with one vertex.},
there exists a simple polyhedron that separates each hyperedge if and only if the
hypergraph is $2$-colorable.

\paragraph{Definitions and notations.}
A \emph{hypergraph} is a pair $H=(V,E)$, where $V$ is finite set of \emph{vertices},
and $E$ is a family of subsets of $V$, called \emph{edges}.
$H$ is said to be $2$-\emph{colorable} if there is a $2$-coloring of $V$
such that no edge is monochromatic; see, \eg, \cite[Ch.~1.3]{AS15}. 

For a polygonal cycle $C$, let $\ring{C}$ and $\overline{C}$ denote the
interior and exterior of $C$, respectively; and $\partial C$ denote its boundary. 
Consider a geometric hypergraph $H=(V,E)$ on points in the plane with no singleton edges. 
A polygonal cycle $C$ is said to be a \emph{separating cycle} for $H$ if
(i)~$C$ is simple (\ie, with no self-intersections); and
(ii)~each edge of $H$ has points inside $C$
(in its interior or on its boundary) and points in the exterior of $C$; that is,
for each edge $A \in E$, both $A \cap (\ring{C} \cup \partial C)$ and
$A \cap \overline{C}$ are nonempty.

A simple polygonal cycle is said to have \emph{zero area}, if $\area(C) \leq \eps$,
for a sufficiently small given $\eps>0$.
Similarly, a polyhedron $P$ is is said to have \emph{zero volume}, if $\vol(P) \leq \eps$,
for a sufficiently small given $\eps>0$.

For $r>0$, let $B(r)$ denote the ball (\ie, disk in the plane) of radius $r$.
For two convex bodies, $A$ and $B$ let $A+B$ denote their \emph{Minkowski sum},
namely $A+B = \{a+b \ | \ a \in A, b \in B\}$. 

\paragraph{Preliminaries and related work.}
Let $S$ be a finite set of points in the plane.
According to an old result of Few~\cite{Fe55}, the length of a minimum spanning path
(resp., minimum spanning tree) of any $n$ points in the unit square is at most $\sqrt{2n}+7/4$
(resp.,  $\sqrt{n}+7/4$). Both upper bounds are constructive; for example, the construction of a
short spanning path works as follows. Lay out about $\sqrt{n}$ equidistant horizontal lines,
and then visit the points layer by layer,
with the path alternating directions along the horizontal strips.
In particular, the length of the minimum spanning tree of any $n$ points in the unit square
is bounded from above by the same expression. 
An upper bound with a slightly better multiplicative constant for a path was
derived by Karloff~\cite{Ka89}. 
Fejes T\'oth~\cite{Fe40} had observed earlier that for $n$ points of a regular hexagonal
lattice in the unit square, the length of the minimum spanning path is
asymptotically equal to $(4/3)^{1/4}\sqrt{n}$, where $(4/3)^{1/4} = 1.0745\ldots$.
As such, the maximum length of the minimum spanning tree of any $n$ points in the unit square
is $\Theta(\sqrt{n})$, for a small constant (close to $1$).
The $O(\sqrt{n})$ upper bound also holds for points in a convex polygon of diameter $O(1)$,
in particular for $n$ points in a rectangle of diameter $O(1)$.
In every dimension $d\geq 3$, Few showed that the maximum length of a shortest path
(or tree) through $n$ points in the unit cube is $\Theta(n^{1-1/d})$;
the $O(n^{1-1/d})$ upper bound is again constructive and extends to rectangular boxes
of diameter $O(1)$.

The topic of ``separation'' has appeared in multiple interpretations;
here we only give a few examples:~\cite{AFK85,CG16,CDKW05,FMP90,GIK02,HM91,HNRS01}.
Some results on watchman tours relying on Few's bounds can be found in~\cite{DT12b};
others can be be found in~\cite{AMP03}.
For instance, in the problem of finding a separating cycle for a given set of segment pairs,
that we study here, it is clear that the edges of the cycle must hit all of the given segments.
As such, this problem is related to the classic problem of hitting a set of segments
by straight lines~\cite{HM91}. In a broader context, coloring of geometric hypergraphs
has been studied, \eg, in~\cite{Sm07}.

\section{Separating cycles for graphs and hypergraphs} \label {sec:main}

By adapting results on hypergraph $2$-colorability to a geometric setting, we obtain
the following.
\begin{theorem} \label{thm:general}
 Let $H=(V,E)$ be a geometric hypergraph on points in the plane with no singleton edges.
 Then $H$ admits a separating cycle if and only if $H$ is $2$-colorable.
\end{theorem}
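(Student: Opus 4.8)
The plan is to prove both directions separately, with the ``only if'' direction being the quick one and the ``if'' direction carrying the real content.

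For the \emph{necessity} (``only if''), suppose a separating cycle $C$ exists for $H$. Color a vertex $v \in V$ by, say, \emph{blue} if $v \in \ring{C} \cup \partial C$ and by \emph{red} if $v \in \overline{C}$. By the definition of a separating cycle, for each edge $A \in E$ the set $A$ meets both $\ring{C} \cup \partial C$ and $\overline{C}$, so $A$ contains at least one blue and at least one red vertex; hence no edge is monochromatic and $H$ is $2$-colorable. (Note this argument uses only the combinatorial partition of $V$ induced by $C$ and does not need $|E| \geq 2$ per se, only the no-singleton assumption, which guarantees each edge has room for two colors.)

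For the \emph{sufficiency} (``if''), fix a proper $2$-coloring of $V$ into a blue class $P$ and a red class $R$, so that every edge contains a point of $P$ and a point of $R$. The goal is to build a simple polygonal cycle $C$ with $P \subseteq \ring{C} \cup \partial C$ and $R \subseteq \overline{C}$; once we have such a $C$, property (ii) of a separating cycle is immediate from the coloring. The natural construction is to take a polygonal cycle that encloses exactly the points of $P$ and avoids all points of $R$ — for instance, a ``thin'' (near-zero-area) cycle that essentially traces a spanning tree of $P$: double each edge of a spanning tree of $P$ to get an Eulerian closed walk, then perturb it into a simple closed curve hugging the tree within a small tube of radius $\delta$. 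Choosing $\delta$ smaller than the distance from $P$'s spanning tree to the finite set $R$ keeps all red points outside, and shortcutting any repeated visits (or just taking the boundary of a small neighborhood of the tree) keeps the curve simple. A subtlety is that the tree, and hence the tube, must avoid the red points entirely; since $R$ is finite and disjoint from $P$, a spanning tree of $P$ in general position (or after an infinitesimal perturbation of the whole point set, which does not affect the coloring or the inside/outside relation for points not on $C$) can be routed to miss all red points, and then a sufficiently thin tube around it does too.

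The main obstacle is making the ``thin tube around a spanning tree of $P$'' construction rigorous as a \emph{simple} polygonal cycle while certifying that \emph{every} red point lands in the exterior. The spanning tree of $P$ may pass arbitrarily close to — or, in degenerate configurations, through — a red point, and the doubled-edge Eulerian walk must be desingularized carefully at vertices of the tree so that the resulting cycle has no self-intersections. The clean way to handle this is: (1) perturb the point set so that no red point lies on any edge of a chosen spanning tree $T$ of $P$; (2) let $\delta$ be strictly less than $\min_{r \in R} \operatorname{dist}(r, T)$ and also less than the minimum pairwise distance within $P$; (3) take $C = \partial N$ where $N$ is the closed $\delta$-neighborhood (Minkowski sum $T + B(\delta)$) of $T$, which for small $\delta$ is a simple closed polygonal curve (after replacing circular arcs by short polygonal approximations) enclosing exactly $P$. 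Then $P \subseteq \ring{C}$ and $R \cap N = \emptyset$, so $R \subseteq \overline{C}$, and since each edge $A$ has a blue and a red vertex, $C$ separates $A$. This yields a separating cycle, completing the equivalence. (If one additionally wants a near-optimal length, one would route $T$ as a short spanning tree, invoking Few's bounds, but for mere existence any spanning tree suffices.)
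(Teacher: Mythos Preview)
Your proof is correct and follows essentially the same route as the paper: the ``only if'' direction via the inside/outside $2$-coloring, and the ``if'' direction by taking a spanning tree of one color class and fattening/doubling it into a thin simple cycle. Two small points where the paper is tidier: (a) it takes a \emph{minimum} spanning tree of the enclosed class, which is automatically non-crossing---you need this for $\partial(T+B(\delta))$ to be a single simple closed curve, and your write-up never explicitly secures it; (b) rather than perturbing the point set (which, strictly speaking, changes the instance $H$), the paper bends each tree edge that passes through a point of the other color into a two-segment path, keeping $V$ fixed---your phrase ``routed to miss all red points'' is exactly this, so you have the right fix in hand, just lead with it instead of the perturbation.
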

\begin{proof}
For the direct implication, assume that $C$ is a separating cycle: then for each $A \in E$,
both $A \cap \ring{C}$ and $A \cap \overline{C}$ are nonempty.
Color the points in the interior of $C$ by red and those in its exterior by blue.
As such, the hypergraph $H$ is $2$-colorable.

We now prove the converse implication. Let $V= R \cup B$ be a partition of the points into red
and blue points, such that no edge in $E$ is monochromatic. We construct a simple polygonal
cycle containing only the red points in its interior. To this end, we first compute a
minimum spanning tree $T$ for the points in $R$; $T$ is non-crossing~\cite[Ch.~6]{PS85},
however there could be blue points contained in edges of $T$. Replace each such edge $s$
with a two-segment polygonal path $\widetilde{s}$ connecting the same pair of points and
lying very close to the original segment, and so that $\widetilde{s}$ is not incident
to any other point.

The resulting tree, $\widetilde{T}$ is still non-crossing and spans all points in $R$.
By doubling the edges of $\widetilde{T}$ and adding short connection edges, if needed,
construct a simple polygonal cycle $C$ of \emph{zero area} that contains it and lies
very close to it; as such, $C$ contains all red points and none of the blue points, as required. 
\end{proof}

Since hypergraph $2$-colorability is NP-complete~\cite{GJ79}, Theorem~\ref{thm:general}
yields the following.

\begin{corollary}
  Given a geometric hypergraph $H=(V,E)$ on points in the plane with no singleton edges,
  the problem of deciding whether $H$ admits a separating cycle is NP-complete. 
\end{corollary}

We next present an approximation algorithm for computing a shortest separating cycle
of a geometric graph. A key fact in our algorithm is the following observation.
\begin{lemma} \label{lem:bipartite}
  Let $G$ be connected bipartite graph. Then (apart from a color flip), 
  $G$ admits a unique $2$-coloring. 
\end{lemma}
\begin{proof}
Recall that a graph is bipartite if and only if it contains no odd cycle~\cite[Ch.~3.3]{Ju99}.
Consider an arbitrary vertex $s$ and color it red. Then the color of any other vertex, say $v$, 
is uniquely determined by the parity of the length of the shortest path from $s$ to $v$
in $G$: red for even length and blue for odd length. Indeed, the vertices are colored alternately
on any path, and since any cycle has even length, all lengths of paths from $s$ to $v$
have the same parity, as required.
\end{proof}

Let $G=(V,E)$ be the input geometric graph no isolated vertices, where $|V|=n$, $|E|=m$.
Let $G_1,\ldots,G_k$ denote the connected components of $G$, where $G_i=(V_i,E_i)$,
for $i=1,\ldots,k$. 

\begin{theorem} \label{thm:graph2}
{\rm (i)}~Given a geometric graph $G=(V,E)$, a separating cycle (if it exists) can be computed
  in $O(m+n \log{n})$ time, where $|V|=n$, $|E|=m$. 
  {\rm (ii)}~Further, a $O(\sqrt{n})$-approximation of the shortest separating cycle can be found
  in polynomial time.
\end{theorem}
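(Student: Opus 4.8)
The plan is to lean on Theorem~\ref{thm:general}: a separating cycle exists iff the graph is $2$-colorable, i.e.\ bipartite, and by Lemma~\ref{lem:bipartite} the $2$-coloring of each connected component is forced up to a color flip. Hence the only freedom is, per component, the binary choice of which color class to place \emph{inside}; once that choice is fixed, the cycle itself is produced by the zero-area MST-doubling trick from the proof of Theorem~\ref{thm:general}. Part (i) needs only \emph{some} valid choice; part (ii) needs a choice whose enclosed point set lies in a square of side $O(\opt)$, after which Few's bound turns ``$n$ points in a box of side $O(\opt)$'' into a tour of length $O(\sqrt n)\,\opt$.

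For (i): run BFS/DFS on each component $G_i$; if some component has an edge joining two equally colored vertices (an odd cycle) report that no separating cycle exists; otherwise let $R$ be the union of the arbitrarily chosen color classes and $B=V\setminus R$. Then, as in the proof of Theorem~\ref{thm:general}: compute the Euclidean minimum spanning tree $T$ of $R$, which is non-crossing; perturb every edge that passes through a point of $B$ into a nearby two-segment path missing all other points; double the edges of the result and add tiny connectors to obtain a simple zero-area cycle $C$ enclosing exactly $R$. Since every edge of $G$ is bichromatic, $C$ has one endpoint of each edge inside and one outside, so it is separating. Coloring costs $O(m+n)$ and the Euclidean MST of $R$ (and the remaining steps) costs $O(n\log n)$, for $O(m+n\log n)$ total; under a general-position assumption the perturbation step is vacuous.

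For (ii): as in (i) first test bipartiteness (report infeasibility if it fails); assume feasibility. Let $C^*$ be an optimal separating cycle enclosing the color-class-union $R^*$. A closed curve enclosing two points at distance $D$ has length $\ge 2D$, so the region bounded by $C^*$, and hence $R^*$, has diameter $\le\opt/2$; in particular the axis-parallel bounding box of $R^*$ has both side lengths $\le\opt/2$. Now enumerate a polynomial-size family $\mathcal Q$ of axis-parallel squares whose anchoring lines and side length are pinned down by input points, so that for \emph{every} subset $S$ of the points $\mathcal Q$ contains a square of side $\le\diam(S)$ containing $S$ (e.g.\ left side on the vertical line through some input point, bottom side on the horizontal line through some input point, side length the $x$- or $y$-extent of some two input points). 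Call $Q\in\mathcal Q$ \emph{valid} if each component has a color class lying inside $Q$; among valid squares pick one of minimum side $s_{\min}$, and for each component let $R_i$ be a color class inside it, $R:=\bigcup_i R_i$. The bounding-box square of $R^*$ lies in $\mathcal Q$, is valid (it contains all of $R^*$, hence each $R^*_i$), and has side $\le\opt/2$, so $s_{\min}\le\opt/2$. Build the zero-area cycle $C$ around the Euclidean MST of $R$ (perturbed off $V\setminus R$) exactly as in (i); it is separating, and since $|R|\le n$ and $R$ lies in a square of side $s_{\min}$, Few's bound~\cite{Fe55} gives $\len(C)\le 2\,\len(\mathrm{MST}(R))+\eps\le(2\sqrt n+O(1))\,s_{\min}+\eps=O(\sqrt n)\,\opt$ (the only degenerate case, $\mathrm{MST}(R)=0$, i.e.\ $R$ a single point, is trivial). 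The enumeration of $\mathcal Q$ and the validity tests dominate and are clearly polynomial.

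The step I expect to be the crux is the per-component choice of which color class to enclose: picking arbitrarily can be off by an unbounded factor (a component may have one color class clustered with the others and the other class scattered far away), so one really needs the observation $\diam(R^*)\le\opt/2$ together with the fact --- a consequence of Lemma~\ref{lem:bipartite} --- that the choice is binary per component, which is what makes the minimum-enclosing-square search over the finitely many combinatorial square types legitimate. The remaining ingredients (the zero-area MST-doubling construction and Few's bound) are essentially inherited from the proof of Theorem~\ref{thm:general} and the preliminaries.
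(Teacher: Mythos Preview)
Your proof is correct and follows essentially the same approach as the paper: test bipartiteness, use Lemma~\ref{lem:bipartite} to reduce to a per-component binary color-flip choice, confine the enclosed class to a region of size $O(\opt)$, and apply Few's bound to the MST of that class followed by the zero-area doubling from Theorem~\ref{thm:general}. The only noteworthy difference is in how the small containing region is located in part~(ii): the paper enumerates the $\binom{n}{2}$ candidate diameter pairs $a,b$ of $\opt$ and, for each, builds a $|ab|\times\sqrt{3}\,|ab|$ rectangle centered at the midpoint of $ab$ (which must contain $\opt$ itself, not merely $R^*$), whereas you enumerate a polynomial family of point-anchored squares and select a smallest valid one via the observation $\diam(R^*)\le\opt/2$. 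Both enumerations are polynomial and both feed the same Few/MST argument, so this is a cosmetic variant rather than a different route.
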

\begin{proof}
(i)~The graph is first tested for bipartiteness and the input instance is declared
  infeasible if the test fails (by Theorem~\ref{thm:general}). This test
  takes $O(m+n)$ time; see, \eg, \cite[Ch.~3.3]{Ju99}. We subsequently assume that
  $G$ is bipartite, with vertices colored by red and blue: $V=R \cup B$. Then the algorithm
  constructs a plane spanning tree $T$ of the red points
  (for instance, a minimum spanning tree or a strictly monotone path),
  and outputs a simple cycle by doubling its edges and avoiding the blue points on its edges
  by bending those edges as indicated in the proof of Theorem~\ref{thm:general}.
  
  To this end, the following parameters are computed: For each red point $r \in R$,
  $\delta_1(r)$ is the minimum distance to a blue point. 
  For each edge $e$ of $T$, $\delta_2(e) \geq 0$ is the minimum distance from $e$ to a blue point
  ($\delta_2(e)=0$ if $e$ is incident to a blue point);
  and $\delta_3(e)>0$ is the minimum nonzero distance from $e$ to a blue point
  ($\delta_3(e)=\infty$ if no blue point is close to $e$, as explained below).
  The set of values $\delta_2(e),\delta_3(e)$, $e \in T$, are used for doubling $T$,
  and the set of values $\delta_1(r)$, $r \in R$, are used to determine the separating cycle
  in the vicinity of red vertices; here we omit the details.
  The set of values $\delta_1(r)$, $r \in R$, and $\delta_2(e),\delta_3(e)$, $e \in T$,
  can be determined using point location for the blue points (as query points)
  in a planar triangulated subdivision containing the edges of $T$,
  all in $O(n \log{n})$ time~\cite[Ch.~6]{BCKO08}.
  The overall time complexity of the algorithm is $O(m+n \log{n})$.
  
(ii)~The algorithm above is modified as follows; the first step is the same
  bipartiteness test. The algorithm $2$-colors the vertices
  in each connected component by red and blue: $V_i= R_i \cup B_i$, for $i=1,\ldots,k$.
  By Lemma~\ref{lem:bipartite}, the $2$-coloring of each component is unique
  (apart from a color flip). 
  The initial coloring of a component may be subsequently subject to a color flip
  if the algorithm so later decides. Obviously, the coloring of each component is
  done independently of the others.

  Then, the algorithm guesses the diameter of $\opt$, as determined by one of the ${n \choose 2}$
  pairs of points in $V$ (by trying all such pairs).
  In each iteration, the algorithm may compute a separating cycle and record its length;
  the shortest cycle found in the process will be output by the algorithm; some iterations
  may be abandoned earlier, without the need for this calculation. 
  
  Consider the iteration in which the guess is correct, with pair $a,b \in V$;
  we may assume for concreteness that $ab$ is a horizontal segment of unit length;
  refer to Fig~\ref{fig:f2}. As such, we have that $\len(\opt) \geq 2|ab|=2$. 
  In this iteration, the algorithm computes a separating cycle whose length is bounded from above by
  $O(\sqrt{n})$. First, the algorithm computes a rectangle $Q$ of unit width and height
  $\sqrt{3}$ centered at the midpoint of $ab$. By the diameter assumption, $\opt$ is contained
  in $Q$. In the next step the algorithm computes a separating cycle $C$
  containing only red points in $Q$ in its interior (however, the initial
  coloring of some of the components may be flipped, as needed).
  Note that any separating cycle must contain for each component either all red points or all
  blue points but not a mix of two colors. 
  By Lemma~\ref{lem:bipartite}, the coloring of each component is unique (modulo a color flip)
  and so for each component at least one of its color classes is entirely contained in $Q$.
  As such, all points in $V$ not contained in $Q$ can be discarded from further consideration.

\begin{figure}[htbp]
\centering\includegraphics[scale=0.65]{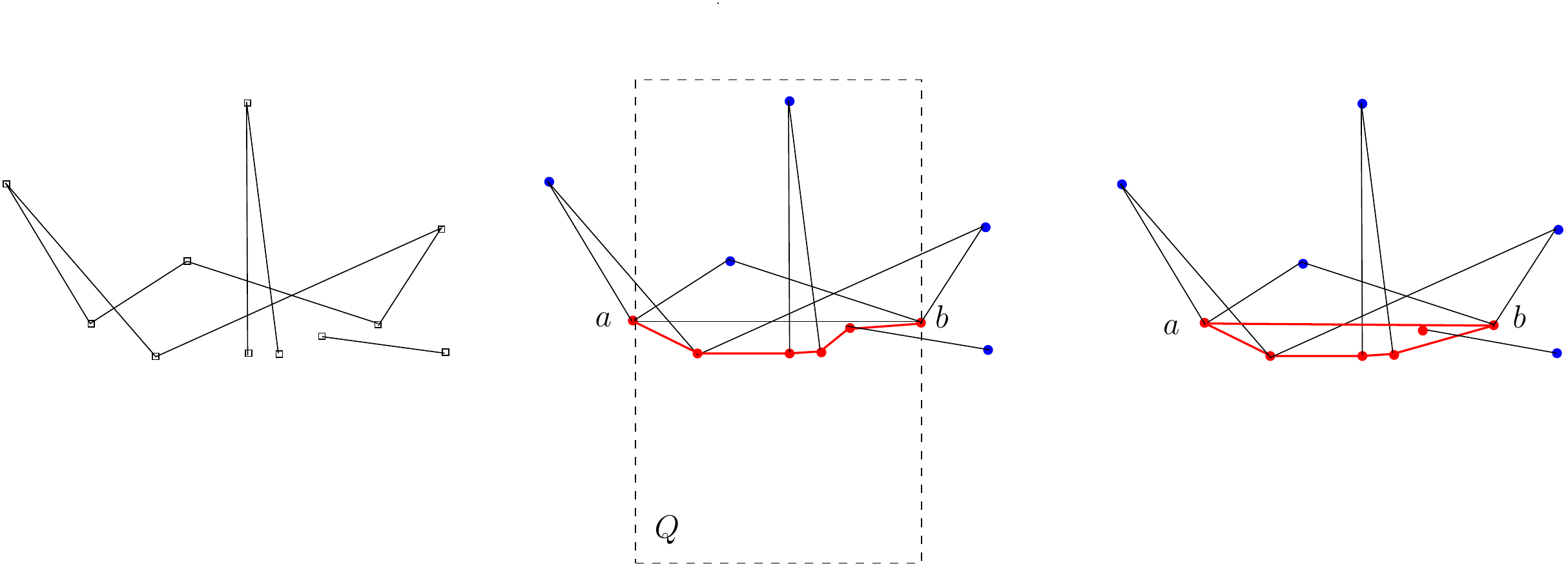}
\caption{Left: input bipartite graph. Center: a separating cycle can be computed from the
  MST of the red points (after color flips).
  Right: a shortest separating cycle.}
\label{fig:f2}
\end{figure}

  Each of the components $G_i$, $i=1,\ldots,k$ is checked against this containment condition:
  if a component is found where neither of its two color classes lies in $Q$, the algorithm
  abandons this iteration (and assumed diameter pair, $ab =\diam(\opt)$). 
  For each component $G_i$: (i)~if $R_i \subset Q$, then the coloring of this component remains
  unchanged, regardless of whether $B_i \subset Q$ or $B_i \not\subset Q$.
  (ii)~if $R_i \not\subset Q$ and $B_i \subset Q$, then the coloring of this component is flipped:
  $R_i \leftrightarrow B_i$, so that $R_i \subset Q$ after the color flip.

Once the recoloring of components is complete, the algorithm computes a minimum spanning tree $T$
of the red points in $Q$. Its length is bounded from above by the length of the spanning tree
computed by Few's algorithm. Since the number of red points does not exceed $n$,
we have $\len(T) =O(\sqrt{n})$.
Finally, $T$ is converted into a separating cycle $C$ by a factor of at most $2+\eps$
increase in length, for any given $\eps>0$, as in the proof of part~(i).
Recalling that $\len(\opt) \geq 2$, it follows that $C$ is a $O(\sqrt{n})$-factor approximation
of a shortest separating cycle.
\end{proof}

An instance on which our algorithm---as well as that of Arkin~\etal~\cite{AGH+16}---performs
poorly appears in Fig.~\ref{fig:f13}.
\begin{figure}[hbtp]
\centering\includegraphics[scale=0.7]{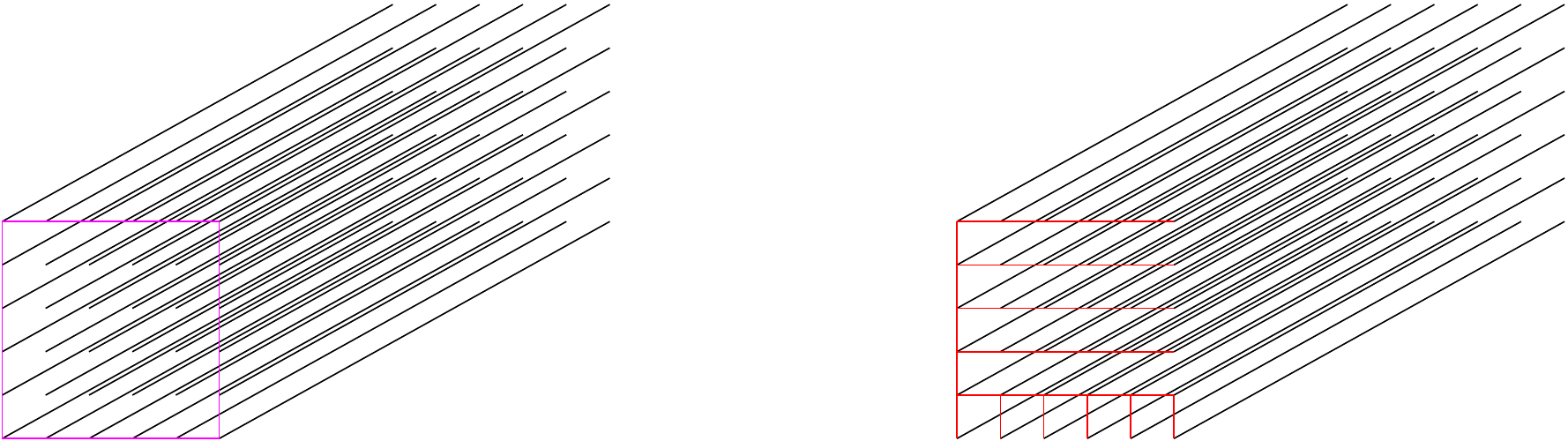}
\caption{An instance for which the $O(\sqrt{n})$-approximation ratio is tight;
  the left and respectively the right segment endpoints are each enclosed in a unit square
  and form the vertices of a $\sqrt{n} \times \sqrt{n}$ uniform grid.
  Left:~a~shortest separating cycle. Right:~a minimum spanning tree of the red points
(as basis for the separating cycle constructed by the algorithm).}
\label{fig:f13}
\end{figure}

\section{Separating cycles for matchings in convex position in the plane} \label {sec:convex}

A matching of $n$ point pairs is said to be in \emph{convex position} if the $2n$ points are
in convex position. In this section we develop a polynomial time approximation scheme~(PTAS)
for this setting; given $n$ point pairs in convex position and $\eps>0$, the algorithm computes a
$(1+\eps)$-approximation of a shortest separating cycle. Denoting an optimal solution by
$\opt$, note that $\opt$ is a convex polygon with $n$ vertices. Moreover, observe that
a shortest separating cycle is a shortest TSP tour for the set of neighborhood pairs
$\{p_i,q_i\}$, $i=1,\ldots,n$; see~\cite[Sec.~7.4]{Mi00} for an overview of the traveling
salesman problem (TSP).
\begin{theorem} \label{thm:convex}
  Given a set of $n$ point pairs in convex position, a $(1+\eps)$-approximation of a shortest
  separating cycle can be computed in time $n^{O(\eps^{-1/2})}$. 
\end{theorem}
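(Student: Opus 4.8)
The plan is to reduce the problem to a shortest-tour computation over a polynomially-bounded family of candidate polygons, by combining two ideas: (a) the structure of an optimal solution $\opt$ as a convex polygon whose vertices are chosen from the $2n$ input points, one per pair; and (b) a sparsification lemma stating that the convex hull of any point set in convex position can be approximated (in perimeter and in containment) by a sub-polygon with few vertices. First I would set up notation: the $2n$ points lie in convex position, so fix their cyclic order $x_1, x_2, \ldots, x_{2n}$ around the hull. Any separating cycle $C$ must, for each pair $\{p_i,q_i\}$, enclose exactly one of the two and exclude the other; WLOG, by relabeling, $C$ encloses $P=\{p_1,\ldots,p_n\}$. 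Since a shortest such cycle is the shortest tour through the chosen enclosed points that keeps the excluded points outside, $\opt = \conv(P^*)$ for the ``right'' choice $P^*$ of one point per pair — more precisely, $\opt$ is the convex hull of the subset of enclosed points that are hull vertices of that subset. So $\opt$ is a convex polygon with vertices among $x_1,\ldots,x_{2n}$.

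Next I would prove the convex polygon approximation lemma that the paper advertises as ``of independent interest'': given a convex polygon $K$ with vertex set among a set $S$ of points in convex position, there is a convex polygon $K'$ with $K \subseteq K'$, with at most $O(\eps^{-1/2})$ vertices (all chosen from $S$ or from a discretization), and $\per(K') \le (1+\eps)\per(K)$. The mechanism is the standard ``cap'' / ``bite'' estimate: walking around $\partial K$, one can replace a run of consecutive edges that together turn by a small total angle by a single chord, and the perimeter loss from cutting a cap subtending exterior angle $\theta$ at a convex polygon is $O(\theta^2)\cdot(\text{local diameter})$. Since the exterior angles sum to $2\pi$, partitioning $\partial K$ into $O(1/\sqrt{\eps})$ arcs each of angular width $O(\sqrt{\eps})$ makes the total perimeter change $O(\sqrt{\eps})\cdot\sqrt{\eps}\cdot \per(K) = O(\eps)\,\per(K)$ after summing; the containment direction (using the chords on the outside, i.e. moving the vertices slightly outward) ensures $K \subseteq K'$ so that the excluded points stay excluded. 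One must be a little careful that the approximating vertices can be taken from a polynomially-sized set — either the original points, or a fine grid of size $\poly(n)$ placed near $\partial\opt$ once its diameter is guessed — so that the enumeration below is over a polynomial ground set.

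Then the algorithm is enumeration plus TSP-on-a-path-in-convex-position. Guess the diameter pair of $\opt$ among ${n\choose 2}$ pairs (to fix scale, as in Theorem \ref{thm:graph2}). Guess which $k = O(\eps^{-1/2})$ of the $\poly(n)$ candidate points are the vertices of $K'$, and their cyclic order; there are $\poly(n)^{O(\eps^{-1/2})} = n^{O(\eps^{-1/2})}$ such guesses. For each guess, check in polynomial time that the resulting convex polygon $K'$ contains, for each pair, a valid configuration — that is, that there is a consistent choice of enclosed point per pair compatible with ``all enclosed points inside $K'$, all excluded points outside $K'$''; by the convex-position structure this is just a couple of sweeps around the circular order. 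Among all feasible guesses, output the one of least perimeter. The approximation guarantee follows: on the correct guess corresponding to $K'$ built from $\opt$ via the lemma, the algorithm's candidate has perimeter $\le (1+\eps)\per(\opt)$, and every candidate it outputs is a genuine separating cycle, so the minimum over candidates is between $\per(\opt)$ and $(1+\eps)\per(\opt)$; rescaling $\eps$ absorbs constants.

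The main obstacle I expect is the approximation lemma's containment requirement married to the vertex-discretization: it is easy to get a low-perimeter convex polygon with $O(\eps^{-1/2})$ vertices, but one simultaneously needs (i) $\opt \subseteq K'$ so no excluded point sneaks in, (ii) $K'$'s vertices drawn from a $\poly(n)$-size set so the enumeration stays $n^{O(\eps^{-1/2})}$, and (iii) the perimeter bound to survive the outward perturbation needed for (i). Balancing these — in particular showing that rounding the chord endpoints to a grid of spacing $\eps\cdot\diam(\opt)/\poly(n)$ costs only an additional $O(\eps)\per(\opt)$ and can be done while keeping containment — is the delicate part; the rest is bookkeeping over the cyclic order of the $2n$ convex-position points.
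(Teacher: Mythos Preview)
Your high-level structure---prove an $O(\eps^{-1/2})$-vertex approximation lemma for convex polygons, then enumerate---matches the paper, but the containment direction in your lemma is backwards and this breaks the argument. You ask for $K'\supseteq \opt$ and then write ``so that the excluded points stay excluded''; but if $K'\supseteq\opt$, the excluded points (those \emph{outside} $\opt$) can land in the annulus $K'\setminus\opt$. Since the $2n$ points are in convex position, an excluded point can sit arbitrarily close to an edge of $\opt$, so no nontrivial outward fattening of $\opt$ is guaranteed to separate all pairs. Hence on the ``correct'' guess your feasibility check may fail, and you have no certificate that the minimum feasible $K'$ you output is within $(1+\eps)$ of $\per(\opt)$. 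The grid-discretization you flag as the delicate part is a symptom of this same wrong direction: an outer approximation cannot take its vertices from the input points, which forces the grid and then the rounding-while-preserving-containment headache.

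The paper avoids all of this by going the other way. Its Lemma~\ref{lem:convex} produces an \emph{inner} subpolygon $Q\subset\opt$ with $O(\eps^{-1/2})$ vertices chosen from the vertices of $\opt$ (hence from the $2n$ input points---no grid needed), such that $\opt\subset\conv(Q)+B(\eps\cdot\diam(\opt))$. The algorithm then enumerates subsets of at most $k$ pairs and all $2^i$ endpoint choices; for each candidate $Q$ it forms the Minkowski enlargement $\conv(Q)+B(\eps(1+2\eps)\diam(Q))$, uses it only as a \emph{filter} to pick one point per pair lying inside, and outputs the convex hull of those $n$ chosen points. Because the $2n$ points are in convex position, the convex hull of any one-per-pair selection is automatically a valid separating cycle, and its perimeter is at most that of the enlargement, which by Cauchy's formula is $\per(Q)+2\pi\eps(1+2\eps)\diam(Q)\le(1+O(\eps))\per(\opt)$. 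So the key fix is: do not output the approximating polygon itself; use it to select representatives and output their hull.
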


We need the following technical lemma for convex polygon approximation.
Our lemma is clearly of independent interest; while it answers a basic question,
we could not find such a result in the literature; there exist however related results,
see, \eg,~\cite{FMRWY92}.
\begin{lemma} \label{lem:convex}
  Given a convex polygon $P$ and any $\eps>0$, there exists
  a subpolygon $Q \subset P$ with $O(1/\sqrt{\eps})$ vertices, such that
  $P \subset \conv(Q) + B(\eps \cdot \diam(P))$. Apart from the constant factor,
  this bound cannot be improved.
\end{lemma}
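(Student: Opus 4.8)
}
The plan is to normalize $\diam(P)=1$, so that $\per(P)\le\pi$ and the exterior angles of $P$ sum to $2\pi$, and then to pick the vertices of $Q$ as a cyclic subsequence $c_0,c_1,\dots,c_{k-1}$ of the vertices of $P$, cutting $\partial P$ into convex arcs $\gamma_0,\dots,\gamma_{k-1}$ where $\gamma_i$ runs from $c_i$ to $c_{i+1}$. Taking $\conv(Q)$ to be the inscribed polygon on these vertices, one has that $P\setminus\conv(Q)$ is exactly the union of the ``lens'' regions squeezed between each $\gamma_i$ and its chord $\overline{c_ic_{i+1}}$ (a chord line meets the convex curve $\partial P$ in at most two points, which forces this decomposition). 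Hence it suffices to show that each lens lies within distance $\eps$ of $\conv(Q)$, while using only $k=O(1/\sqrt\eps)$ cuts.

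The geometric ingredient is a sagitta estimate: if a convex arc has arc length $s$ and its tangent direction turns by a total angle $\tau$ along it, then every point of the arc lies within distance $\tfrac12 s\tau$, and also within $\tfrac12 s$, of the chord joining its endpoints. I would prove this by parametrizing the arc by arc length, writing the coordinate perpendicular to the chord as $y(t)=\int_0^t\sin\phi(u)\,\dd u$, where $\phi$ is the (monotone) tangent angle measured from the chord direction; since the net perpendicular displacement is zero, $\int_0^s\sin\phi=0$, so $|y(t)|\le\tfrac12\int_0^s|\sin\phi|$, and $|\sin\phi|$ is bounded by $\min(1,\tau)$ because $\sin\phi$ must change sign over the arc. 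If the cuts are placed so that every arc satisfies $s_i\tau_i\le\eps$, this gives deviation at most $\tfrac12\eps$ from the chord \emph{line}; the chord \emph{segment} requires a small extra step, since the arc can overhang its endpoints, but overhang is only possible when $\tau_i>\pi/2$, and in that case $s_i\le \eps/(\pi/2)<\eps$, so the whole lens has diameter $O(\eps)$ and lies within $\eps$ of the vertex $c_i\in\conv(Q)$. Either way the lens is within $\eps$ of $\conv(Q)$.

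For the cut placement I would use a greedy rule: starting from an arbitrary vertex $c_0$, let $c_{i+1}$ be the last vertex $v$ (moving cyclically from $c_i$) for which the arc from $c_i$ to $v$ still has $s\cdot\tau\le\eps$; this always advances past at least one vertex because the arc to the immediate successor is a single edge, with $\tau=0$. The main obstacle is the counting: the usual ``each step makes progress $\Omega(\sqrt\eps)$'' argument fails, because $\gamma_i$ can be forced to stop just before a vertex that is simultaneously the far endpoint of a long edge \emph{and} a corner of large exterior angle, making essentially no measurable progress. I would resolve this by a four-way case split: the arc from $c_i$ to the vertex just past $c_{i+1}$ has $s\cdot\tau>\eps$, so at least one of the following holds --- $s_i>\tfrac12\sqrt\eps$, or $\tau_i>\tfrac12\sqrt\eps$, or the edge of $P$ following $c_{i+1}$ has length $>\tfrac12\sqrt\eps$, or the exterior angle of $P$ at $c_{i+1}$ exceeds $\tfrac12\sqrt\eps$. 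Each of these four events occurs for only $O(1/\sqrt\eps)$ indices $i$: the first two since $\sum_i s_i\le\pi$ and $\sum_i\tau_i\le2\pi$, and the last two since $P$ has at most $O(1/\sqrt\eps)$ edges of length $>\tfrac12\sqrt\eps$ and at most $O(1/\sqrt\eps)$ vertices of exterior angle $>\tfrac12\sqrt\eps$. Therefore $k=O(1/\sqrt\eps)$.

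Finally, for optimality up to the constant factor, I would take $P$ to be a regular $N$-gon with $N$ large, inscribed in a circle of unit diameter: any convex polygon with $k$ vertices contained in $P$ leaves some boundary point of $P$ at distance $\Omega(1/k^2)$ from it, by the standard inscribed-polygon estimate for the disk (one of the $\le k$ edges subtends an arc of angular length $\ge 2\pi/k$, whose midpoint is $\Omega(1/k^2)$ away). Hence $P\subset\conv(Q)+B(\eps\cdot\diam(P))$ forces $k=\Omega(1/\sqrt\eps)$.
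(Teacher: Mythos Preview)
Your argument is correct and follows the same high-level strategy as the paper---a greedy cyclic selection of vertices, with the count bounded via the two ``budgets'' $\per(P)\le\pi$ and $\sum(\text{exterior angles})=2\pi$---but the implementation differs in its details. The paper's greedy rule selects $p_j$ as the \emph{first} vertex for which some intermediate vertex already deviates by at least $\eps$ from the chord $p_ip_j$, and then doubles the vertex count by also including $p_{j-1}$ in $Q$; containment is then immediate from the stopping rule, and the count is handled by splitting chords of $R$ into short ($\le\sqrt\eps$) and long ones, bounding the long ones by perimeter and the short ones via the observation that a short chord $\sigma$ makes an angle $\alpha(\sigma)$ with the first edge of $P$ satisfying $\tan\alpha(\sigma)\ge\eps/|\sigma|\ge\sqrt\eps$, with $\sum\alpha(\sigma)\le2\pi$. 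Your version instead uses the product $s\tau$ as a proxy for the sagitta, which obliges you to prove the sagitta estimate $|y|\le\tfrac12 s\tau$ explicitly (and to handle the overhang case $\tau>\pi/2$ separately) but lets you skip the extra subdivision step; your four-way case split on $s_i$, $\tau_i$, the next edge length, and the next exterior angle is a somewhat finer decomposition than the paper's short/long dichotomy, yet it draws on exactly the same two global budgets. Both routes give $O(1/\sqrt\eps)$ vertices with comparable constants, and your regular-polygon lower bound is essentially the same as the paper's.
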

\begin{proof}
Observe that the number of vertices of $Q$ does not depend on the number of vertices of $P$,
it only depends on $\eps$. We will assume without loss of generality that $\diam(P)=1$. 
Let $P=p_1,p_2, \ldots,p_m$ be the vertices of $P$ labeled clockwise.

First construct  a subpolygon $R \subset P$ iteratively. Set $i \gets 1$ and include $p_i$ into $R$.
Scan $P$ clockwise until we find the first vertex, $p_j$ such that at least one
vertex among $p_{i+1},\ldots, p_{j-1}$ is at distance at least $\eps$ from the chord $p_i p_j$.
Include $p_j$ into $R$. Set $i \gets j$ and continue in the same manner, 
scan $P$ clockwise until we find the first vertex, $p_j$ such that at least one
vertex among $p_{i+1},\ldots, p_{j-1}$ is at distance at least $\eps$ from the chord $p_i p_j$,
and so on. Suppose that the scanning ends after $r$ phases; then $R$ has either $r$ or $r+1$ vertices;
see Fig.~\ref{fig:f16}\,(left). Note that the last side of $R$ can be unconstrained, \ie, with 
no guarantee of a vertex of $P$ at distance at least $\eps$ from it. 
\begin{figure}[hbtp]
\centering\includegraphics[scale=0.65]{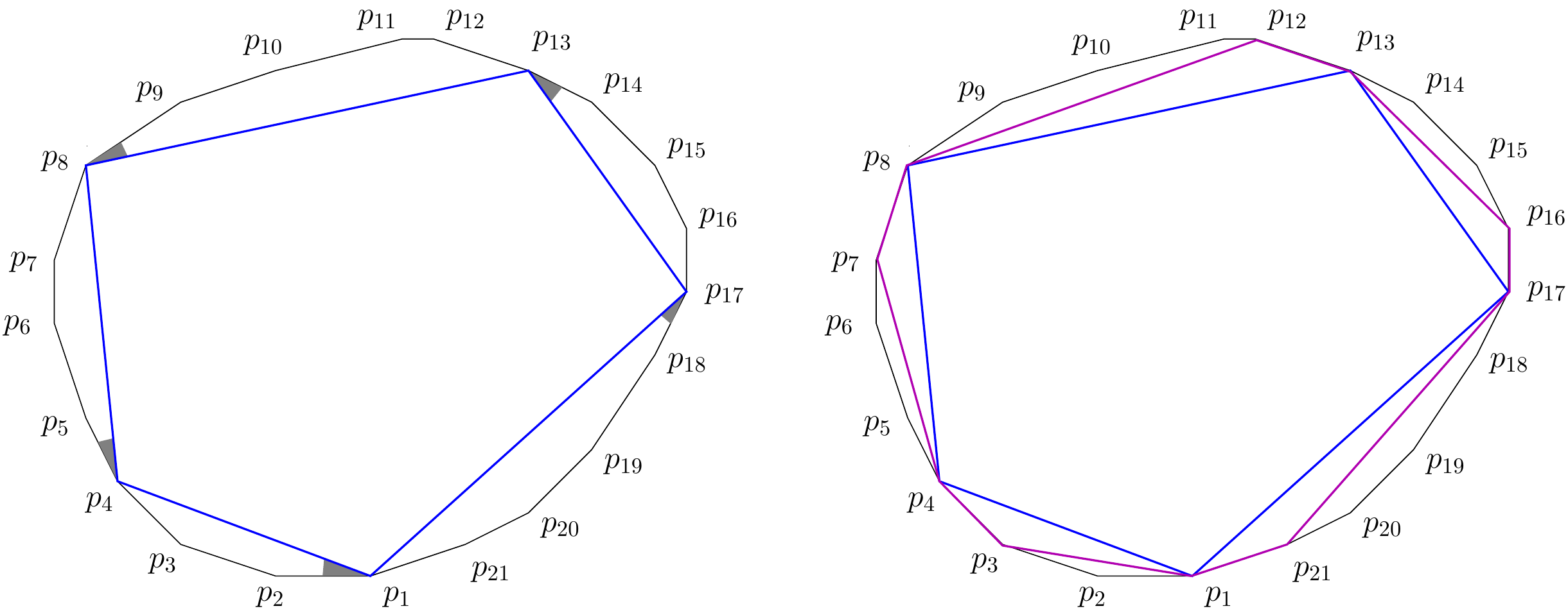}
\caption{Convex polygon approximation; here $m=21$, $r=5$.
  $R=\{p_1,p_4,p_8,p_{13},p_{17}\}$ is drawn in blue lines. 
  $Q=\{p_1,p_3,p_4,p_7,p_8,p_{12},p_{13},p_{16},p_{17},p_{21}\}$ is drawn in magenta lines.
  The angles $\alpha(\sigma)$ are shaded (left side of figure):
  $\alpha(p_1 p_4)=\angle{p_2 p_1 p_4}$, $\alpha(p_4 p_6)=\angle{p_5 p_4 p_8}$, and so on.}
\label{fig:f16}
\end{figure}

Each side of $R$ is a chord or side of $P$. A side of $R$ is said to be \emph{short}
if its length is at most $\sqrt{\eps}$ and \emph{long} otherwise. 
Write $r=r_1 + r_2$, where $r_1$ and $r_2$ are the number of short and long sides of $R$,
respectively. 

Since $\diam(P)=1$, we have $\per(P) \leq \pi$ by the classic isoperimetric inequality.
Since $R$ is a subpolygon of $P$, we have $\per(R) \leq \per(P)$ by the triangle inequality;
thus $\per(R) \leq \pi$. This further implies that
$r_2 \leq \per(R)/\sqrt{\eps} \leq \pi/\sqrt{\eps}$.

When scanning $P$ clockwise, let $\sigma$ be a side of $R$ that is a nontrivial chord of $P$;
let $\alpha(\sigma)$ denote the angle made by the chord with the first clockwise edge of $P$
on the boundary of $P$. By convexity, the angles $\alpha(\sigma)$ corresponding to all nontrivial chords of $P$
that are edges in $R$ are pairwise non-overlapping if their apices are placed at a common point;
see Fig.~\ref{fig:f16}\,(left). As such,
\begin{equation} \label{eq:sum}
  \sum_{\sigma \in R} \alpha(\sigma) \leq 2\pi.
\end{equation}
Recall that $\tan{x} \leq 2x$, for $0 \leq x \leq \pi/3$. 
Consider any short side $\sigma \in R$ that is a chord of $P$; we thus have
\begin{equation} \label{eq:alpha}
2 \alpha(\sigma) \geq \tan{\alpha(\sigma)} \geq \frac{\eps}{|\sigma|}
\geq \frac{\eps}{\sqrt{\eps}} =\sqrt{\eps}, \text{~~~~or~} \alpha(\sigma) \geq \pi/3.
\end{equation}
It follows from~\eqref{eq:sum} and~\eqref{eq:alpha} that
$$ r_1 \leq \frac{2\pi}{\sqrt{\eps}/2} + \frac{2\pi}{\pi/3} = \frac{4\pi}{\sqrt{\eps}} +6. $$
Consequently,
$$ r=r_1 + r_2 \leq \frac{4\pi}{\sqrt{\eps}} +6 + \frac{\pi}{\sqrt{\eps}}
= \frac{5\pi}{\sqrt{\eps}} + 6. $$

To obtain $Q$ we further subdivide each polygonal arc of $P$ spanned by edges of $R$
(with the possible exception of the last arc)  as follows: the arc $p_i p_j$ is subdivided at $p_{j-1}$,
that is, this vertex is included along with $p_i$ and $p_j$ into $Q$. Observe that $Q$ has at most
$$ 2(r+1) \leq \left \lceil \frac{10 \pi}{\sqrt{\eps}} +14 \right \rceil = O(\eps^{-1/2}) $$
vertices. 
(In particular, $Q$ has at most $\frac{32}{\sqrt{\eps}}$ vertices, if $\eps>0$ is sufficiently small.)
In addition, by construction, $Q$'s enlargement contains $P$: recall that vertices in $R$ are
iteratively chosen so that the corresponding arcs of $P$ are minimally breaking the inclusion
property stated in the lemma, and so the subdivision of each arc described above will satisfy
this property. In particular, each vertex of $P$ is at distance at most $\eps$ from the corresponding
side of $Q$. 

\medskip
The case of a regular polygon shows that the bound is tight. Let $P$ be a regular
$m$-gon inscribed in a circle of unit radius, and $Q \subset P$ be a subpolygon, such that
every vertex of $P$ is at distance at most $\eps$ from the corresponding side of $Q$.
Let $2 \beta$ be the center angle spanned by the longest edge of $Q$; for simplicity, 
assume that the number of vertices of $P$ on the corresponding polygonal arc is odd
(the other case is similar). 
The distance condition requires $1-\cos \beta \approx \beta^2/2= \eps$, which 
solves to $\beta =\sqrt{2 \eps}$. This implies that the number of vertices of $Q$ is
$\Omega(\beta^{-1}) =\Omega(\eps^{-1/2})$, as required.
\end{proof}

\paragraph{Remark.} If $P \subset \conv(Q) + B(\eps \cdot \diam(P))$,  \ie,
$Q$ is a suitable approximation as required in Lemma~\ref{lem:convex},
then $\diam(Q) \geq \diam(P)/(1+2\eps)$, and consequently,
$$ P \subset \conv(Q) + B(\eps (1+2\eps) \cdot \diam(Q)). $$
Indeed, $ \eps \cdot \diam(P) \leq \eps (1+2\eps) \cdot \diam(Q)$,
from which the inclusion follows. 

\medskip
Assume now that $P$ is the \emph{unknown} convex polygon $\opt$.
Set $k =\lceil \frac{10 \pi}{\sqrt{\eps}} +14 \rceil$.
By convexity, $P$ contains exactly one point from each pair; as such, $|P|=n$. 
The algorithm finds a subpolygon $Q \subset P$ satisfying the property in Lemma~\ref{lem:convex}
by generating all subpolygons $Q$ with at most $k$ vertices.
It does so by generating all ${n \choose \leq k}$ subsets of at most $k$ segments;
for a subset of $i \leq k$ segments, it goes through all $2^i$ possible endpoint
selections (one point from each pair).
The reason is that the shortest separating cycle of some $i \leq k$ endpoints
(one point from each pair) may be an infeasible candidate for $Q$; see~Fig.~\ref{fig:f17},
but generating shortest separating cycles for each choice of $i \leq k$ endpoints
will yield a feasible candidate, as required by Lemma~\ref{lem:convex}. 
\begin{figure}[hbtp]
\centering\includegraphics[scale=0.99]{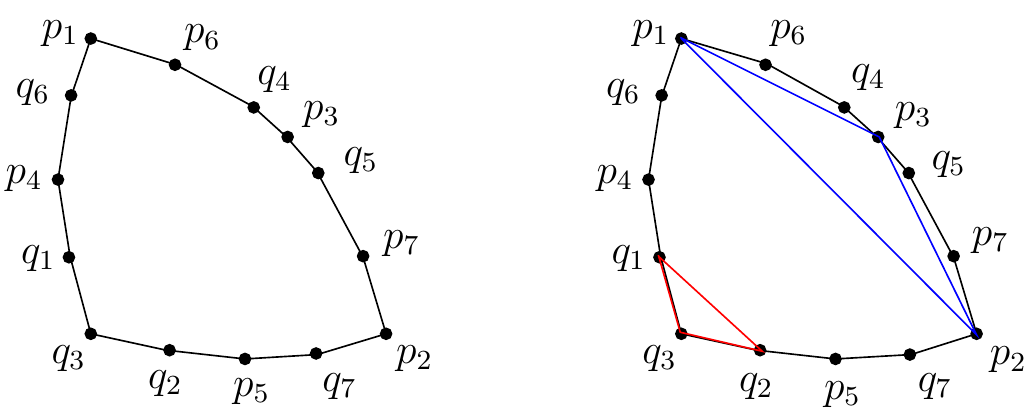}
\caption{Left: a set of seven pairs with points in convex position.
  Right: the shortest separating cycle for the pairs $\{p_i,q_i\}$, $i=1,2,3$, 
  $Q=\{q_1,q_2,q_3\}$ (drawn in red) is an infeasible candidate,
  \ie, $\conv(Q) + B(\eps \cdot \diam(\opt))$
  does not contain the optimal convex polygon $P=\opt =\{p_1,p_6,q_4,p_3,q_5,p_7,p_2\}$.
  On the other hand, $Q=\{p_1,p_3,p_2\}$ (drawn in blue) is a valid candidate.}
\label{fig:f17}
\end{figure}

\bigskip
\noindent{\bf Algorithm~A1.} 
\begin{itemize} \itemsep 1pt
\item[] {\sc Step 1:} Set $k = \left \lceil \frac{10 \pi}{\sqrt{\eps}} +14 \right \rceil$.
  Generate all $\sum_{i=1}^k {n \choose i}$ subsets of at most $k$ segments.

\item[] {\sc Step 2:} For a subset of size $i$, go through all $2^i$ endpoint selections 
(one point from each pair).

\item[] {\sc Step 3:} For a subset $Q$ of $i \leq k$ points as described above,
  if $\conv(Q) + B(\eps (1+2\eps) \cdot \diam(Q))$ contains at least
  one point from each of the $n$ input pairs, keep one such point from each pair
  (if both points of a pair are enclosed, choose one arbitrarily);
  then compute the perimeter of the convex polygon made by these $n$ points,
  \ie, the length of the separating cycle.  Otherwise skip this subset.
  
\item[] {\sc Step 4:} Return the cycle of minimum length from among those
  computed in {\sc Step 3}. 

\end{itemize}

The running time of Algorithm~A1 is determined by the number of candidates examined,
namely 
$$ \sum_{i=1}^k {n \choose i} 2^i \leq n^k =n^{O(\eps^{-1/2})}. $$
The algorithm correctness follows from Lemma~\ref{lem:convex}; indeed,
$P=\opt$ is contained in
$$ \conv(Q) + B(\eps \cdot \diam(P)) \subset \conv(Q) +B(\eps (1+2\eps) \cdot \diam(Q)), $$
namely the enlargement of one of the candidates $Q$ that are generated.
The length of the separating cycle that is returned is bounded from above by the perimeter
of the enlargement. We employ the following standard fact.

\begin{lemma} \label{lem:cauchy}
Let $Q$ be a planar convex body.
Then $\per(\conv(Q) + B(r)) =\per(Q) + 2\pi r$. 
\end{lemma}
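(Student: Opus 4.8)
The plan is to prove Lemma~\ref{lem:cauchy} by decomposing the boundary of the Minkowski sum $\conv(Q) + B(r)$ into pieces contributed by the two summands, which is the classical picture behind Cauchy's formula for the perimeter of an outer parallel body. First I would observe that since $Q$ is convex, we may replace $Q$ by $\conv(Q)=Q$ throughout, so we are computing $\per(Q+B(r))$ for a convex body $Q$. The boundary of $Q+B(r)$ consists of two types of arcs: (a) translates of the edges/smooth boundary portions of $\partial Q$, pushed outward by distance $r$ along the outer normal, and (b) circular arcs of radius $r$ centered at the vertices of $Q$ (or, in the smooth case, distributed continuously), which fill in the gaps at points where the outer normal direction turns.

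The key computation has two halves. For the straight (or smooth) portions of $\partial Q$: each such portion is translated rigidly, hence contributes exactly its own length, and these portions together account for all of $\per(Q)$. For the circular fill-in arcs: the arc inserted at a vertex $v$ of $Q$ subtends an angle equal to the exterior angle of $Q$ at $v$ (the turning angle of the outer normal), so it has length $r \cdot \theta_v$; summing over all vertices, $\sum_v \theta_v = 2\pi$ because the outer normal makes exactly one full turn as one traverses $\partial Q$. Hence the total perimeter is $\per(Q) + \sum_v r\theta_v = \per(Q) + 2\pi r$. To handle general convex $Q$ (not necessarily polygonal) cleanly, I would either phrase the turning-angle argument via the support function — noting $\conv(Q)+B(r)$ has support function $h_Q(u) + r$ and perimeter $\int_{S^1} h(u)\,\dd u$ after one integration by parts, so the additive $r$ contributes $\int_{S^1} r\,\dd u = 2\pi r$ — or approximate $Q$ by inscribed polygons and pass to the limit using continuity of perimeter on convex bodies under Hausdorff convergence.

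I expect the main obstacle to be purely expository rather than mathematical: making precise the claim that the exterior angles (turning of the outward normal) sum to $2\pi$ and that the non-circular part of $\partial(Q+B(r))$ is an isometric copy of $\partial Q$, without getting bogged down in cases (polygon vs.\ strictly convex vs.\ mixed). The cleanest route is almost certainly the support-function / mixed-volume argument, since for a convex body $K$ in the plane one has $\per(K) = \int_0^{2\pi} h_K(\theta)\,\dd\theta$, and $h_{\conv(Q)+B(r)} = h_Q + r$ gives the result in one line; but if the paper prefers an elementary self-contained argument, the polygonal turning-angle computation above, followed by an approximation remark, suffices. Either way the proof is short, and since the lemma is labeled a ``standard fact,'' a two-to-three sentence argument citing the Minkowski-sum boundary decomposition (or Cauchy's perimeter formula) is all that is warranted.
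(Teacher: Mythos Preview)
Your proposal is correct. The paper's own proof uses Cauchy's surface-area formula in the width form, $\per(Q)=\int_0^\pi w_Q(\alpha)\,\dd\alpha$, observes that the Minkowski sum with $B(r)$ increases every width by exactly $2r$, and integrates the constant $2r$ over $[0,\pi]$ to pick up the extra $2\pi r$. This is essentially your support-function argument in disguise, since $w_Q(\alpha)=h_Q(\alpha)+h_Q(\alpha+\pi)$ and integrating $h_Q$ over $[0,2\pi]$ is the same as integrating $w_Q$ over $[0,\pi]$; the paper just chooses the width formulation and cites it as Cauchy's formula.

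Your primary route---the boundary decomposition into translated edges plus circular corner arcs whose angles are the exterior angles and hence sum to $2\pi$---is a genuinely different, more elementary argument that avoids any integral identity. It has the advantage of being fully self-contained and visual for polygons, at the cost of needing the approximation remark to cover general convex bodies. The paper's width/Cauchy approach handles arbitrary convex $Q$ in one stroke but imports the integral formula as a black box. Either is perfectly adequate for what is, as you say, a standard fact.
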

\begin{proof}
Let $w(\alpha)$ denote the width of $Q$ in direction $\alpha$, \ie, the
minimum width of a strip of parallel lines enclosing $Q$, whose lines
are orthogonal to direction $\alpha$.
According to Cauchy's surface area formula~\cite[pp.~283--284]{PA95},
for any planar convex body $Q$, we have
\begin{equation} \label{eq:cauchy}
\int_0^{\pi} w(\alpha) \intd \alpha = \per(Q).
\end{equation}

Observe that the width of $\conv(Q) + B(r)$ in direction $\alpha$ is $w(\alpha) +2r$, for any
$\alpha \in [0,\pi)$. Using the stated formula we have
  $$  \per(\conv(Q) + B(r)) = \int_0^{\pi} (w(\alpha) +2r) \intd \alpha
  = \int_0^{\pi} w(\alpha) \intd \alpha + 2 \pi r 
  = \per(Q) + 2 \pi r, $$
as required.
\end{proof}

By Lemma~\ref{lem:cauchy}, the length of the separating cycle $C$ returned by Algorithm~A1
is bounded from above (for $\eps$ sufficiently small) by 
\begin{align*}
  \len(C) &\leq \per(\conv(Q) + B(\eps (1+2\eps) \cdot \diam(Q))\\
  &= \per(Q) + 2 \pi \eps (1+\eps) \cdot \diam(Q)\\
  &\leq \per(P) + 2 \pi \eps (1+\eps) \cdot \diam(P) \\
  &\leq \per(P) + 2 \pi \eps (1+\eps) \cdot \per(P) /2 \\
  &= (1+ \pi \eps (1+\eps)) \,  \per(P) = (1 + \pi \eps (1+\eps)) \, \len(\opt) \\
  &\leq (1 + 4\eps) \, \len(\opt). 
\end{align*}
The required approximation follows by rescaling $\eps$,
and this completes the proof of Theorem~\ref{thm:convex}.

\section{Concluding remarks} \label{sec:conclusion}

\paragraph{Remark 1.}
If the input is a set of pairs so that the corresponding graph is bipartite, it admits a separating cycle
by Theorem~\ref{thm:general}. (If the corresponding graph is not bipartite, no separating cycle exists.)
Similarly, if the input is a $2$-colorable hypergraph, it admits a separating cycle.
For illustration, we recall some common instances of $2$-colorable hypergraphs.
A hypergraph $H=(V,E)$ is called $k$-uniform if all $A \in E$ have $|A|=k$. 
A~random $2$-coloring argument gives that any $k$-uniform hypergraph with
fewer than $2^{k-1}$ edges is $2$-colorable~\cite[Ch.~1.3]{AS15};
as such, by Theorem~\ref{thm:general}, it admits a separating cycle.
Slightly better bounds have been recently obtained; see~\cite[Ch.~3.5]{AS15}.
Similarly, let  $H=(V,E)$ be a hypergraph in which every edge has size at least $k$
and assume that every edge $A \in E$ intersects at most $\Delta$ other edges,
\ie, the maximum degree in $H$ is at most $\Delta$. If $e(\Delta+1) \leq 2^{k-1}$
(here $e=\sum_{i=0}^\infty 1/i!$ is the base of the natural logarithm),
then by the Lov\'asz Local Lemma, $H$ can be $2$-colored~\cite[Ch.~5.2]{AS15}
and so by Theorem~\ref{thm:general}, it admits a separating cycle;
moreover, if a $2$-coloring is given, it can be used to obtain a separating cycle.
While testing for $2$-colorability can be computationally expensive in a general setting
(recall that hypergraph $2$-colorability is NP-complete~\cite{GJ79}), 
it can be always achieved in exponential time.

\paragraph{Remark 2.}
Theorem~\ref{thm:graph2} generalizes to $3$-dimensional polyhedra.
A polyhedron in $3$-space is a simply connected solid bounded by piecewise linear
$2$-dimensional manifolds. The \emph{perimeter} $\per(P)$ of a polyhedron $P$ is the total length
of the edges of $P$ (as in~\cite{DT12b}).

For part~(i), a method similar to that used in the planar case can be used to construct
a separating polyhedron in $\RR^3$ (or $\RR^d$). However, since computing minimum spanning trees
in $\RR^3$ is more expensive~\cite[Ch.~9]{Ep00}, we employ a slightly different approach
(in particular, this approach is also applicable to the planar case).
We may assume a coordinate system so that no pair of points have the same $x$-coordinate.
First, the points in $V$ are colored by red or blue as a result of the
bipartiteness test, in $O(m+n)$ time.
The algorithm then computes a (spanning tree of the red points in the form of a) 
$x$-monotone polygonal path $P$ spanning the red points; this step takes $O(n\log{n})$ time.
From $P$, it then obtains a $x$-monotone polygonal path $\widetilde{P}$
spanning the red points and not incident to any blue point
($P=\widetilde{P}$ if no blue points are incident to edges of $P$);
$\widetilde{P}$ is constructed in $O(n\log{n})$ time.

To this end, $P$ and all blue points are projected onto the $xoy$ plane.
Let $\sigma(\cdot)$ denote the projection function. Note that $\sigma(P)$ is $x$-monotone 
and that the projection $\sigma(b)$ of a blue point $b$ can be incident to at most one edge
of $\sigma(P)$; given $b$, such an edge can be determined in $O(\log{n})$ time by binary search.
Checking the projection points $\sigma(b)$ against corresponding edges of $\sigma(P)$
allows for testing whether the original edges of $P$ are incident to the respective
blue points. Further, this test allows replacing each such edge $s$
with a two-segment polygonal path $\widetilde{s}$ connecting the same
pair of points and lying very close to the original segment, and so that $\widetilde{s}$ is not
incident to any other point; see Fig~\ref{fig:f12}.
Such a replacement can be executed in $O(1)$ time per edge.
Finally the algorithm computes a polyhedron of \emph{zero volume} that contains $\widetilde{P}$;
as such, the polyhedron contains all red points but no blue points;
this step takes $O(n\log{n})$ time. 
\begin{figure}[htbp]
\centering\includegraphics[scale=0.85]{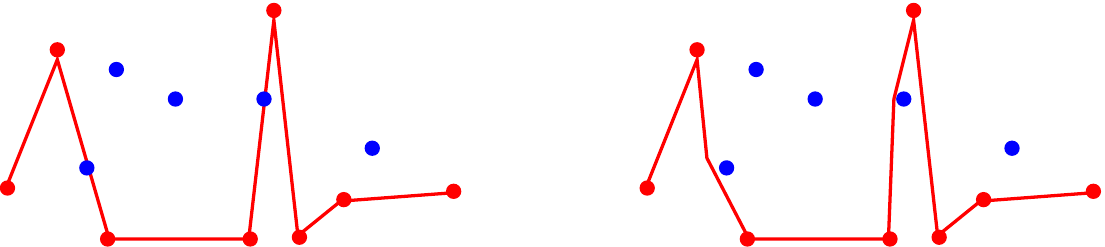}
\caption{Replacing two edges of the $x$-monotone path spanning the red points.}
\label{fig:f12}
\end{figure}

For part~(ii),  instead of a rectangle based on segment $ab$ as an assumed diameter pair,
the algorithm works with a rectangular box where $ab$ is parallel to a side of the box
and is incident to its center. The upper bound on the perimeter of the separating polyhedron
follows from Few's bound mentioned in the preliminaries: it is roughly three times the length
of a shortest path (or tree) spanning the red points. 
\begin{theorem} \label{thm:graph3}
{\rm (i)}~Given a geometric graph $G=(V,E)$ in $\RR^3$, a separating polyhedron (if it exists)
  can be found in $O(m+ n\log{n})$ time, where $|V|=n$, $|E|=m$. 
{\rm (ii)}~Further, a $O(n^{2/3})$-approximation of a separating polyhedron of minimum perimeter
  can be found in polynomial time.
\end{theorem}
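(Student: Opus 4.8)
The plan is to adapt the proof of Theorem~\ref{thm:graph2} to three dimensions, replacing the ``plane spanning tree, then doubled cycle'' construction by a ``$x$-monotone spanning path, then thin enclosing polyhedron'' construction, and replacing the $O(\sqrt n)$ bound of Few by its three-dimensional analogue $O(n^{2/3})$.

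\emph{Part (i).} First run a bipartiteness test on $G$ in $O(m+n)$ time and declare the instance infeasible if it fails; this is correct by the $\RR^3$ extension of Theorem~\ref{thm:general} stated in the introduction (a separating polyhedron exists if and only if the vertex set is $2$-colorable, which holds if and only if $G$ is bipartite). Otherwise let $V=R\cup B$ be the resulting $2$-coloring, and assume after a generic rotation that no two points share an $x$-coordinate. Sort $R$ by $x$-coordinate and join consecutive red points to obtain an $x$-monotone polygonal path $P$ spanning $R$, in $O(n\log n)$ time. Next eliminate the blue points incident to edges of $P$: project $P$ and $B$ onto the $xy$-plane via $\sigma$; the chain $\sigma(P)$ is $x$-monotone, so each $\sigma(b)$ is incident to at most one edge of $\sigma(P)$, located in $O(\log n)$ time by binary search on the $x$-coordinate, and whenever $b$ itself lies on the corresponding edge $s$ of $P$ we replace $s$ by a two-segment detour $\widetilde s$ lying very close to $s$ and incident to no other input point, in $O(1)$ time. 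This yields, in $O(n\log n)$ total time, an $x$-monotone path $\widetilde P$ spanning $R$ and incident to no blue point. Finally thicken $\widetilde P$ into a simply connected polyhedron of \emph{zero volume}: surround each edge of $\widetilde P$ by a thin triangular prism having that edge as axis, and glue consecutive prisms at their shared vertices by tiny connectors; if the cross-sections are thinner than the distance from $\widetilde P$ to the nearest blue point and thin enough that the total volume is at most $\eps$, the resulting polyhedron contains exactly the red points. The overall running time is $O(m+n\log n)$.

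\emph{Part (ii).} Again run the bipartiteness test, then $2$-color every connected component $G_i$; by Lemma~\ref{lem:bipartite} this coloring is unique up to a color flip. The structural fact we need is that in any separating polyhedron the set of enclosed input points meets each $V_i$ in an independent set whose complement in $V_i$ is also independent, hence equals $R_i$ or $B_i$; thus $\opt$, a minimum-perimeter separating polyhedron, encloses exactly one color class per component, and (exactly as in the proof of Theorem~\ref{thm:graph2}) its diameter is attained at a pair $a,b$ of input points, which the algorithm guesses among the $\binom{n}{2}$ possibilities. Consider the correct guess and assume $ab$ is horizontal of unit length. Since $a,b\in\opt$ and the $1$-skeleton of $\opt$ is a connected graph, a path in that skeleton joining a vertex of least $x$-coordinate to a vertex of greatest $x$-coordinate has length at least $|ab|=1$, so $\per(\opt)\ge 1$. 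Since $ab$ realizes $\diam(\opt)$, the polyhedron $\opt$ fits in a box $Q$ of diameter $O(1)$ with $ab$ parallel to a side and through its center. As in the plane, for each component the enclosed color class lies in $\opt\subseteq Q$, so at least one of $R_i,B_i$ lies in $Q$; recolor each component so that $R_i\subseteq Q$, abandoning the iteration if for some component neither class lies in $Q$. Now all (at most $n$) red points lie in the diameter-$O(1)$ box $Q$, so Few's bound mentioned in the preliminaries gives a spanning path of the red points of length $O(n^{2/3})$; fix its incidences with blue points as in part~(i) and thicken it into a zero-volume polyhedron whose perimeter is at most roughly three times the path length, i.e. $O(n^{2/3})$. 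Comparing with $\per(\opt)\ge 1$ shows that the shortest polyhedron produced over all $\binom{n}{2}$ guesses is an $O(n^{2/3})$-approximation, and the entire procedure runs in polynomial time.

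\emph{Main obstacle.} The one genuinely new ingredient beyond the planar case is the passage from a polygonal path to an enclosing polyhedron: one must exhibit a thin, simply connected polyhedron that contains the path (hence all red points), avoids every blue point, has arbitrarily small volume, and --- for part~(ii) --- has total edge length only about three times the length of the path. Securing these four properties simultaneously, together with the check (inherited from the planar argument) that the minimum-perimeter solution has diameter comparable to the guessed input pair so that the box $Q$ really contains $\opt$, is where the care is needed; everything else is a direct transcription of the proof of Theorem~\ref{thm:graph2} and of Few's construction in $\RR^3$.
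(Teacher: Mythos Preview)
Your proposal is correct and follows essentially the same approach as the paper: the paper likewise replaces the planar MST by an $x$-monotone spanning path of the red points (precisely to avoid the higher cost of computing an MST in $\RR^3$), locates blue points on its edges via projection onto the $xy$-plane plus binary search, bends the offending edges, and encloses the resulting path in a zero-volume polyhedron; for part~(ii) it uses the same diameter-guessing, box $Q$, per-component recoloring, and Few's $O(n^{2/3})$ bound, noting that the resulting polyhedron's perimeter is ``roughly three times'' the path length. Your write-up even supplies a couple of details the paper leaves implicit (the prism description of the thickening and the $1$-skeleton argument for $\per(\opt)\ge 1$).
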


We offered a characterization of geometric hypergraphs that admit separating cycles
and gave several approximation algorithms. We conclude with the following questions
regarding the shortest separating cycle in the plane.

\begin{enumerate} \itemsep 1pt
\item Can the $O(\sqrt{n})$ approximation factor for the general version of the problem
  be improved?
\item Can sharper results be obtained for plane (noncrossing) geometric graphs?
  For the case of a plane matching?
\item What is the computational complexity of the problem
  for matchings in convex position? Does the problem admit a polynomial-time
  algorithm?
\end{enumerate}

\paragraph{Acknowledgment.}
The author is grateful to an anonymous reviewer for his careful reading of the manuscript
and pertinent remarks.


\begin{thebibliography}{99}

\bibitem {AFK85} N. Alon, Z. F{\"{u}}redi, and M. Katchalski,
Separating pairs of points by standard boxes,
\emph{European Journal of Combinatorics}
\textbf{6(3)} (1985), 205--210.

\bibitem {AS15} N. Alon and J. Spencer, 
{\it The Probabilistic Method},  4th edition,
Wiley, Tel Aviv and New York, 2015.

\bibitem{AGH+16}
E.~Arkin, J.~Gao, A.~Hesterberg, J.~S.~B.~Mitchell, and J.~Zeng,
The shortest separating cycle problem,
\emph{Proc. 14th International Workshop on Approximation and Online Algorithms}
(WAOA 2016), vol.~10138 of Lecture Notes in Computer Science, 2016, pp.~1--13.

\bibitem{AMP03} E. M. Arkin, J. S. B. Mitchell and C. D. Piatko,
Minimum-link watchman tours,
\emph{Information Processing Letters}
\textbf{86} (2003), 203--207.

\bibitem{BCKO08}
M.~de Berg, O.~Cheong, M.~van Kreveld, and M.~Overmars,
\emph{Computational Geometry: Algorithms and Applications},
3rd edition, Springer, 2008.

\bibitem{CG16}
S. Cabello and P. Giannopoulos,
The complexity of separating points in the plane,
\emph{Algorithmica}
\textbf{74(2)} (2016), 643--663.

\bibitem {CDKW05} 
G. C\u{a}linescu, A. Dumitrescu, H. Karloff, and P.-J.~Wan,
Separating points by axis-parallel lines,
\emph{International Journal of Computational Geometry \& Applications}
\textbf{15(6)} (2005), 575--590.

\bibitem{DT12b} A. Dumitrescu and C. D. T\'oth,
Watchman tours for polygons with holes,
\emph{Computational Geometry: Theory and Applications}, 
\textbf{45} (2012), 326--333.   

\bibitem {Ep00} D. Eppstein,
Spanning trees and spanners,
in J.-R. Sack and J.~Urrutia (editors),
{\em Handbook of Computational Geometry}, pages 425--461,
Elsevier Science, Amsterdam, 2000.

\bibitem{Fe40} L. Fejes T\'oth,
\"Uber einen geometrischen Satz,
\emph{Mathematische Zeitschrift}
\textbf{46} (1940), 83--85.

\bibitem{Fe55} L. Few,
The shortest path and shortest road through $n$ points,
\emph{Mathematika}
\textbf{2} (1955), 141--144.

\bibitem{FMRWY92} R. Fleischer, K. Mehlhorn, G. Rote, E. Welzl, and C.-K. Yap,
Simultaneous inner and outer approximation of shapes,
\emph{Algorithmica}
\textbf{8(5-6)} (1992), 365--389.

\bibitem{FMP90} 
R. Freimer, J.S.B. Mitchell, and C. Piatko, 
On the complexity of shattering using arrangements.
Extended abstract in
{\em Proc. of the 2nd Canadian Conference on Computational Geometry}
(CCCG 1990), 1990, pp.~218--222. 
Technical Report~91-1197, Dept. of Comp. Sci., Cornell University,
April, 1991.

\bibitem{GJ79} M. R. Garey and D. S. Johnson,
\emph{Computers and Intractability: A Guide to the Theory of NP-Completeness},
W.H. Freeman and Co., New York, 1979.

\bibitem {GIK02} D. R. Gaur, T. Ibaraki, and R. Krishnamurti,
Constant ratio approximation algorithm for the rectangle stabbing
problem and the rectilinear partitioning problem,
\emph{Journal of Algorithms}
\textbf{43} (2002), 138--152. 

\bibitem {HM91} R. Hassin and N. Megiddo,
Approximation algorithms for hitting objects with straight lines,
\emph{Discrete Applied Mathematics}
\textbf{30(1)} (1991), 29--42. 

\bibitem {HNRS01} F. Hurtado, M. Noy, P. A. Ramos, and C. Seara,
Separating objects in the plane with wedges and strips, 
\emph{Discrete Applied Mathematics}
\textbf{109} (2001), 109--138.

\bibitem {Ju99} D. Jungnickel,
\emph{Graphs, Networks and Algorithms},
Springer Verlag, Berlin, 1999.

\bibitem{Ka89} H. J. Karloff,
How long can a Euclidean traveling salesman tour be?
\emph{SIAM Journal on Discrete Mathematics}
\textbf{2} (1989), 91--99.

\bibitem {Mi00} J.~S.~B. Mitchell,
Geometric shortest paths and network optimization,
in J.-R. Sack and J.~Urrutia (editors),
{\em Handbook of Computational Geometry}, pages 633--701,
Elsevier Science, Amsterdam, 2000.

\bibitem{PA95}
J.~Pach and P.~K.~Agarwal,
\emph{Combinatorial Geometry},
John Wiley, New York, 1995.

\bibitem{PS85}
F. P. Preparata and M. I. Shamos,
\emph{Computational Geometry},
Springer-Verlag, New York, 1985.

\bibitem{Sm07}
S. Smorodinsky,
On the chromatic number of geometric hypergraphs,
\emph{SIAM Journal on Discrete Mathematics}
\textbf{21(3)} (2007), 676--687.

\bibitem {WS11} D. Williamson and D. Shmoys,
\emph{The Design of Approximation Algorithms},
Cambridge University Press, 2011.

\end{thebibliography}
\end{document}